\newenvironment{proof}{\paragraph{Proof:}}{\hfill(Proved)}
\renewcommand\@biblabel[1]{#1.}
\newtheorem{theorem}{Theorem}
\newtheorem{result}{Result}
\renewcommand\@biblabel[1]{#1.}
\begin{document}
%\baselineskip=24pt
%\singlespacing

%\parskip = 10pt
%\def \qed {\hfill \vrule height7pt width 5pt depth 0pt}

%\def\refhg{\hangindent=20pt\hangafter=1}
%\def\refmark{\par\vskip 2mm\noindent\refhg}
%\def\refmark{\par\vskip 2.50mm\noindent\refhg}
%\include{dbtweaged}
%\include{chirp_rev1recent.tex}
%\include{dbtlnagad_blind}
%\include{tksdk1}
%\include{realam_csa}
%\usepackage{amsmath}

\begin{frontmatter}

\title{Bayesian reliability acceptance sampling plans for competing risks data under interval censoring} %\tnoteref{label0}}
%\tnotetext[label0]{This is only an example}

\author[label1]{Biswabrata Pradhan \corref{cor1}}
%\address[label5]{Some University}

\address[label1]{Statistical Quality Control \& Operations Research Unit, Indian Statistical Institute, Kolkata, India}
%\address[label2]{Address Two\fnref{label4}}

\cortext[cor1]{Corresponding author}

\journal{Springer Handbook of Reliability}
\ead{bis@isical.ac.in}
%\ead[url]{author-one-homepage.com}
\author[label1]{Rathin Das}

\begin{abstract}
%\doublespacing
We obtain a reliability acceptance sampling plan for independent competing risk data under interval censoring schemes using the Bayesian approach. At first, the Bayesian reliability acceptance sampling plan is obtained where the decision criteria of accepting a lot is pre-fixed.  For large samples, computing Bayes risk is computationally intensive. Therefore, an approximate Bayes risk is obtained using the asymptotic properties of the maximum likelihood estimators. Lastly, the Bayesian reliability acceptance sampling plan is obtained, where the decision function is arbitrary. The manufacturer can derive an optimal decision function by minimizing the Bayes risk among all decision functions. This optimal decision function is known as Bayes decision function. The optimal sampling plan is obtained by minimizing the Bayes risk. The algorithms are provided for the computation of optimum Bayesian reliability acceptance sampling plan. Numerical results are provided and comparisons between the Bayesian reliability acceptance sampling plans are carried out.  

\end{abstract}
\begin{keyword}
Asymptotic property  \sep Bayes risk \sep Bayes decision \sep exponential distribution  \sep maximum likelihood estimation \sep reliability
\end{keyword}
\end{frontmatter}
%\tableofcontents
\addcontentsline{toc}{section}{Abbreviation \& Notaion}
\section*{Abbreviation \& Notaion}
 \begin{longtable}{ll}
 ASP&acceptance sampling plan\\
 RASP & reliability acceptance sampling plan\\
PDF & probability density function\\
RF& reliability function\\
RV&random variable\\
CDF& cumulative distribution function\\
FIM& Fisher information matrix\\
JPDF& joint probability density function\\
MLE& maximum likelihood estimator\\
BSP& Bayesian RASP\\
IID& independent and identically distributed\\
wrt&with respect to\\
$X_i$& RV of lifetime of the $i^{th}$ item\\
$\boldsymbol{\nu}$ & vector of parameters\\
$f(\cdot\ | \ {\boldsymbol{\nu}})$& PDF of $X_i$\\
$C_i$& cause of the failure of the $i^{th}$ item\\
$G(\cdot\ | \ {\boldsymbol{\nu}})$& sub-distribution of $X_i$ and $C_i$\\
$F(\cdot \ | \ {\boldsymbol{\nu}})$& CDF of $X_i$\\
$n$ & sample size\\
$k$& inspection number\\
$\tau_1,\tau_2,\ldots,\tau_k$& inspection times points\\
$\boldsymbol{\tau}=(\tau_1,\tau_2,\ldots,\tau_k)$ & vector of inspection time points\\
$\boldsymbol{\zeta}=(n,\boldsymbol{\tau},k)$& vector of sampling parameters\\
$D_{mj}$& RV denoting due to cause $j$, the number of failures in the\\
&$(\tau_{m-1},\tau_m]$ interval\\
$\boldsymbol{D}=(D_{11},\ldots,D_{1J},\ldots,D_{m1},\ldots,D_{mJ})$& vector of number of failures in $m$ intervals due to each cause\\
$d_{mj}$& observed value of $D_{mj}$\\
$\boldsymbol{d}=(d_{11},\ldots,d_{1J},\ldots,d_{m1},\ldots,d_{mJ})$& vector of observed number of failures in $m$ intervals\\
&due to each cause\\
$P_{\boldsymbol{D}}$& joint distribution of $\boldsymbol{D}$\\
$D_t=\sum_{m=1}^k\sum_{j=1}^JD_{mj}$& RV denoting the total number of failures after life test\\
$d_t$& observed value of $D_t$\\
$E[D_t]$& expected number of failures \\
$E[M]$& expected inspection number\\
$E[\tau]$& expected test duration\\
$E[D_t^*]$& expected number of failures for optimal BSP \\
$E[\tau^*]$& expected test duration for optimal BSP\\
$E[M^*]$& expected inspection number for optimal BSP\\
$P(A)$&probability of accepting the lot for BSP\\
\end{longtable}
\section{Introduction}
Acceptance sampling plan (ASP) is a popular method for deciding on acceptance or rejection of a lot of products. RASP is a special type of ASP where the decision is taken based on product's reliability. In this method, a sample is drawn for life testing from the lot. After life testing, using the lifetime data, a suitable statistic is obtained to take a decision of acceptance or rejection about the lot. In reliability studies, determining an optimal RASP is a crucial task. The methods for choosing an optimal RASP are defense sampling schemes, producer's and consumer's risk method, decision-theoretic approach and Dodge and Roming's plan. From an economic perspective, the most scientific and reasonable method is the decision-theoretic approach because it is determined by certain economic factors, such as minimizing loss or maximizing return.

%The reliability or any lifetime characteristics of the product are key measured quality characteristics. It has a great impact on consumers' decisions. Therefore, in ASP, the measurement of the reliability or any lifetime characteristics of the product can be used to take the decision of acceptance or rejection a lot. This problem of ASP is called the reliability acceptance sampling problem (RASP).
Due to time and cost constraints, censored life tests are generally conducted in practice. Determination of RASP based on censored data is an important task in reliability studies. The most common type of censoring schemes used in practice are type I, type II and hybrid censoring schemes. Continuous monitoring of life test is required to obtain lifetime data under these schemes. However, in many situations, it is not possible to monitor the test continuously and the items on the test are inspected at some pre-specified inspection times. Here, the failure times are not observed but only the number of failed items is observed at each inspection time. The data are referred to as interval-censored data, and the test is known as the interval-censored test (see \citet{lu2009interval}). In ICS, all items are put on life test at $\tau_0=0$ and the items are inspected at pre-determined inspection times $\tau_1<\tau_2<\cdots<\tau_k$, where the inspection number, $k$, is also pre-determined. 

There have been several works for determining optimal BSPs for exponential distribution under different censoring schemes. \citet{yeh1990optimal} considered the optimal BSP under type-II censoring. A similar approach is considered determination of BSP under different censoring schemes by many authors. \citet{yeh1994bayesian} and  \citet{yeh1995bayesian} considered for Type-I and random censoring schemes, respectively. \citet{chen2004designing} and \citet{lin2008exact} provided methodology for hybrid censoring scheme. These BSPs for censored data have not considered the prior belief of the parameter in the decision function.

\citet{papazoglou1999bayesian} analyzed a Bayesian decision problem based on reliability of the product and pointed out that the decision of acceptance or rejection about the lot can be made based on the existing prior information. However, due to the uncertainty of the parameters, the consumer can not be sure about the decision regarding the lot. Additional information needs to be obtained through life-testing at a cost. The lifetime data can be combined with the existing prior information to update the information. The decision can be taken using the updated information on the product’s reliability. \citet{lin2002bayesian} considered a decision function called Bayes decision function under type-I censoring. This Bayes decision function was also studied for different censoring schemes. For example, \citet{liang2013optimal} and \citet{yang2017optimal} used Bayes decision function for the type-I hybrid and modified type-II hybrid censoring schemes, respectively. The Bayes decision function for the ICS scheme was provided by \citet{chen2015bayesian}.

 The BSPs referenced above primarily focused on systems that have a single failure mode. However, product may fail due to more than one cause called competing risks.  In the competing risk setup, the data consists of product lifetime and an indicator variable denoting the specific cause of the failure. If we ignore the information on causes of failure, then it may incorrectly infer a product's reliability. For details,  see \citet{crowder2001classical} and \citet{nelson2009accelerated}. To analyze the competing risk data, the latent failure rate model proposed by \citet{cox1959analysis} is the most popular model in reliability analysis. The cause of the failure may be statistically independent or dependent. It is generally assumed, however, that these competing risks are statistically independent even though the assumption of statistical dependence on the data is more realistic. Because in the assumption of dependence, the latent failure rate model may face some identifiability issues. To avoid such problems, it is assumed that latent failure times are statistically independent.
 
 The problem of designing RASP for competing risk data is an important task in reliability study. \citet{wu2014planning} studied RASP for competing risk data under progressive ICS schemes by the classical approach. Recently, \citet{prajapati2022optimal} studied BSP for competing risk data under type-II and hybrid censoring schemes. We consider designing optimal BSP for competing risk data under ICS.

The main contribution of the paper is to determine the optimal BSP by minimizing the Bayes risk in the case of exponential distribution under ICS. As discussed before, the consumer can takes the decision based on the product's reliability. Therefore, one of the purposes of this paper is to develop a BSP using the decision function based on the MLE of the RF. The second purpose of the paper is to approximate the BSP using the asymptotic properties of the MLE of the RF.  The third purpose of the paper is to develop a BSP using the Bayes decision function. 

The organization of the paper is as follows.  In Section \ref{model}, the model for exponential distribution in the presence of competing risks data under ICS is discussed and the Bayes risk for the general decision function is derived. In Section \ref{dec1}, the explicit form of the decision function is derived when the decision is taken based on the MLE of the RF and also an approximate Bayes risk is derived using the asymptotic properties of the MLE of the RF. In Section \ref{bayesa}, the Bayes decision function is derived. In Section \ref{num}, the numerical study is performed. In Section \ref{con}, the conclusion is made along with direction of future work. 
%The organization of the paper is as follows. At first, we discuss the model for exponential distribution when the item was inspected at $\tau_1,\tau_2,\cdots,\tau_k$ in section \ref{model}. In that section, we discuss the consumer's utility function, the prior distribution, the decision function, and the manufacturer's loss function. Also, the model is described. In section \ref{exp}, We determine the optimum RASPs when items are inspected at the equal length of the interval. The exact form of the decision function and the manufacturer's cost are obtained. In section \ref{exp1}, optimal BSP is discussed. The Bayes decision function is obtained by taking the same loss function and prior distribution. In Section \ref{num}, numerical results are presented for both cases.
\section{Model \& Assumptions}\label{model}
Suppose $n$ items are put on life test. $X_1,\ldots,X_n$ be the IID lifetimes of $n$ items. Let us consider that every item may fail due to any one of $J$ causes of failure. Let $X_{ij}$ be the latent lifetime of $i^{th}$ item which fails due to $j^{th}$ cause and therefore $X_i=\min\{X_{i1},X_{i2},\ldots,X_{iJ}\}$, $i=1,2, \ldots,n$. It is assumed that $X_{ij}$ follows an exponential distribution with PDF
\begin{align*}
    f_j(t\ | \ \nu_j)=\nu_j\exp(-\nu_j t),~\nu_j>0. 
\end{align*}
In the work,  $X_{i1},\ldots,X_{iJ}$ are assumed to be independent. Therefore $X_i$ follows exponential distribution with PDF \begin{align*}
      f(t\ | \ {\boldsymbol{\nu}})=\nu\exp(-\nu t), 
\end{align*}
CDF 
\begin{align*}
    F(t\ | \ {\boldsymbol{\nu}})=1-\exp(-\nu t)
\end{align*}
and RF
\begin{align*}
    \Bar{F}(t\ | \ {\boldsymbol{\nu}})=\exp(-\nu t),
\end{align*}
where $\nu=\nu_1+\cdots+\nu_J$ and $\boldsymbol{\nu}=(\nu_1,\nu_2,\ldots,\nu_J)$, for $i=1,\ldots,n$.
Now let $C_i$ denote the cause of the failure of the $i^{th}$ item. The sub-distribution of $X_i$ and $C_i$ is defined as
\begin{align*}
    G(j,t\ | \ {\boldsymbol{\nu}})=P(C=j,X_i\leq t)=\frac{\nu_j}{\nu}\left[1-\exp(-\nu t)\right]. 
\end{align*}
In the above scenario, $t>0$, $i=1,\ldots,n\text{ and }  j=1,\ldots,J$. Let us consider that the life test is conducted under ICS as described in Introduction. In ICS, we only observe for each cause, the number of failures at pre-specified inspection times $\tau_1,\ldots,\tau_k$, where $0<\tau_1<\cdots<\tau_k$. Suppose $D_{mj}$ denotes number of failures in the interval $(\tau_{m-1},\tau_m]$ due to the cause $j$, for $m=1,\ldots,k$ and $j=1,\ldots,J$. Let $\boldsymbol{D}=(D_{11},\ldots,D_{1J},D_{21},\ldots,D_{2J}$ $,\ldots,D_{m1},\ldots,D_{mJ})$. The observed data up to inspection time $\tau_k$ is $\boldsymbol{d}=(d_{11},\ldots,d_{1J},d_{21},\ldots,d_{2J},\ldots,d_{m1},\ldots,d_{mJ})$, where $d_{mj}$ is the observed value of $D_{mj}$. for $m=1,\ldots,k$ and $j=1,\ldots,J$. Therefore, $d_t=\sum_{m=1}^k\sum_{j=1}^Jd_{mj}$ is the total number of failures and $n-d_t$ is the survived items after the life test. Note that $\boldsymbol{D}$ have multinomial distribution with joint distribution
\allowdisplaybreaks\begin{align}\label{joint}
    P_{\boldsymbol{D}}(\boldsymbol{d}\ | \ \boldsymbol{\nu})&
    =\frac{n!}{\prod\limits_{m=1}^k\prod\limits_{j=1}^Jd_{mj}!(n-d_t)!}\prod_{m=1}^k\prod_{j=1}^J \left[G(j, \tau_{m}\ | \ {\boldsymbol{\nu}})-G(j,\tau_{m-1}\ | \ {\boldsymbol{\nu}})\right]^{d_{mj}}\left[\Bar{F}(\tau_k\ | \ {\boldsymbol{\nu}})\right]^{n-d_t}\nonumber\\
    &=\frac{n!}{\prod\limits_{m=1}^k\prod\limits_{j=1}^Jd_{mj}!(n-d_t)!}\prod_{m=1}^k\prod_{j=1}^J \left[\frac{\nu_j}{\nu}\left\{\exp(-\nu \tau_{m-1})-\exp(-\nu \tau_m)\right\}\right]^{d_{mj}}\nonumber\\
    &~~~~~\times\left[\exp(-\nu \tau_k)\right]^{n-d_t}.
\end{align} 
The vector of designing parameters for ICS is denoted by
$\boldsymbol{\zeta} = (n, \tau_1,\ldots,\tau_k, k)$ and based on the observed data $\boldsymbol{d}$, the decision function $a(\boldsymbol{d}\ | \ \boldsymbol{\zeta})$ is defined as
\begin{align}
    a(\boldsymbol{d}\ | \ \boldsymbol{\zeta})=\begin{dcases}
        1  & \text{if accepting the lot when data $\boldsymbol{d}$ is observed} ~~\\
        0 & \text{if rejecting the lot when data $\boldsymbol{d}$ is observed},
    \end{dcases}
\end{align}
However, whether the lot is accepted or rejected, there is a loss associated with the decision.  If a bad lot is accepted, the manufacturer can bear an acceptance cost $h(\boldsymbol{\nu})$. We take the loss function as $h(\boldsymbol{\nu})=C_0+\sum_{j=1}^JC_j\nu_j+\mathop{\sum_{i=1}^J\sum_{j=1}^J}\limits_{\substack{i\leq j}}C_{ij}\nu_i\nu_j$, where the coefficient $C_i$ and $C_{jl}$, for $i=0,1,\ldots,J$, $j=1,\ldots,l$ and $l=1,\ldots,J$ are taken in such a way that $h(\boldsymbol{\nu})\geq 0$  $\forall$  $\boldsymbol{\nu}>\boldsymbol{0}$. This loss function is known as a quadratic loss function. If the lot is rejected, the lot is returned or discarded. So the cost due to rejection is fixed. Let $C_r$ be the rejection cost. The loss function is considered as
\begin{align}\label{acc}
\mathcal{L}\left(a(\boldsymbol{d}\ | \ \boldsymbol{\zeta})\ | \  \boldsymbol{\nu}\right)=
\begin{dcases*}
    h(\boldsymbol{\nu})+nC_s+\tau C_{\tau}+MC_I-(n-d_t)r_s& if $a(\boldsymbol{d}\ | \ \boldsymbol{\zeta})\ | \  \boldsymbol{\nu})=1$\\
    C_r+nC_s+\tau C_\tau+MC_I-(n-d_t)r_s& if $a(\boldsymbol{d}\ | \ \boldsymbol{\zeta})\ | \  \boldsymbol{\nu})=0$,
\end{dcases*} 
\end{align}
where $C_s$ is the cost of each item in the sample,  $C_{\tau}$ is the cost per unit time, $r_s~(<C_s)$ is the salvage value of each item that survived after life test and $C_I$ is the cost of each inspection, $\tau$ is the duration of life test, $M$ is the inspection number and $d_t$ is the total number of failures in the life testing experiment.\\
$\mathcal{L}(a(\boldsymbol{d}\ | \ \boldsymbol{\zeta})\ | \ \boldsymbol{\nu})$ in  (\ref{acc}) can be expressed as 
\allowdisplaybreaks\begin{align*}
   \mathcal{L}\left(a(\boldsymbol{d}\ | \ \boldsymbol{\zeta})\ | \  \boldsymbol{\nu}\right)=&\left[h(\boldsymbol{\nu})+nC_s+\tau C_{\tau}+MC_I-(n-d_t)r_s\right]a(\boldsymbol{d}\ | \ \boldsymbol{\zeta})\\
   &+\left[C_r+nC_s+\tau C_{\tau}+MC_I-(n-d_t)r_s\right]\left[1-a(\boldsymbol{d}\ | \ \boldsymbol{\zeta})\right]\\
   =&C_r+nC_s+\tau C_{\tau}+MC_I-(n-d_t)r_s+a(\boldsymbol{d}\ | \ \boldsymbol{\zeta})\left[h(\boldsymbol{\nu})-C_r\right].
\end{align*}
Now, the Bayes risk is denoted by $R(\boldsymbol{\zeta},a)$ and defined by 
\begin{align}\label{R}
R(\boldsymbol{\zeta},a)&=E_{\boldsymbol{\nu}}E_{\boldsymbol{d}\ | \ \boldsymbol{\nu}}\left[\mathcal{L}\left(a(\boldsymbol{d}\ | \ \boldsymbol{\zeta})\ | \  \boldsymbol{\nu}\right)\right]\nonumber\\
&=C_r+n(C_s-r_s)+C_{\tau} E_{\boldsymbol{\nu}}[\tau\ | \ \boldsymbol{\nu}]+C_IE_{\boldsymbol{\nu}}[M\ | \ \boldsymbol{\nu}]+r_sE_{\boldsymbol{\nu}}[D_t\ | \ \boldsymbol{\nu}]+R_1(\boldsymbol{\zeta},a)\nonumber\\
&=C_r+n(C_s-r_s)+C_{\tau} E[\tau]+C_IE[M]+r_sE[D_t]+R_1(\boldsymbol{\zeta},a),
\end{align}
where
\begin{align}\label{R1}
R_1(\boldsymbol{\zeta},a)
&=E_{\boldsymbol{\nu}}E_{\boldsymbol{d}\ | \ \boldsymbol{\nu}} [a(\boldsymbol{d}\ | \ \boldsymbol{\zeta})[h(\boldsymbol{\nu})-C_r]\nonumber\\
&=\sum_{\boldsymbol{d}\in\mathcal{X}}\int_{\boldsymbol{\nu}} [h(\boldsymbol{\nu})-C_r]P_{\boldsymbol{D}}(\boldsymbol{d}\ | \ \boldsymbol{\nu})p(\boldsymbol{\nu})~ d\boldsymbol{\nu}
\end{align}
and $\mathcal{X}$ is the set of all possible $\boldsymbol{d}$ for which the lot is accepted after life testing. Now, we have $E[D_t]=E_{\boldsymbol{\nu}}[E[D_t\ | \ \boldsymbol{\nu}]]=E_{\boldsymbol{\nu}}[n(1-\exp(-\nu \tau_k))]$, $E[\tau]=E_{\boldsymbol{\nu}}[\tau \ | \ \boldsymbol{\nu}]=\sum_{i=1}^k \tau_i P_i$,
where $\mathcal{P}_i= P_i-P_{i-1}$ is the probability that the life-test terminates at $\tau_i$ 
with $P_i=\left[1-\exp(-\nu \tau_i)\right]^n,$
for $i=1,\ldots,k-1$ and $P_k=1$
and $E[M\ | \ \boldsymbol{\nu}]=\sum_{i=1}^k i ~\mathcal{P}_i$.

 \noindent Now, the optimal BSP $(\boldsymbol{\zeta}^*,a^*)=(n^*,\tau_1^*,\tau_2^*,\ldots,\tau_k^*,k^*,a^*)$ is obtained by maximizing $R(\boldsymbol{\zeta},a)$, i.e,
$(\boldsymbol{\zeta}^*,a^*)=\arg\max\limits_{(\boldsymbol{\zeta},a)}R(\boldsymbol{\zeta},a)$.
Note that for equal length of interval $h$, the optimal BSP is $(\boldsymbol{\zeta}^*,a^*)=(n^*,h^*,k^*,a^*)$.
\section{BSP based on reliability criteria decision function}\label{dec1}
Here we obtain BSP under the assumption that the manufacturer knows the consumer's acceptance criteria. The consumer takes his/her decision based on the reliability criteria. A lot is of acceptable quality when the product's reliability at a specified time point $\tau_0$, $\Bar{F}(\tau_0\ | \ {\boldsymbol{\nu}})$ is greater than $R_0$. 
Let $\widehat{\boldsymbol{\nu}}$ is the MLE of ${\boldsymbol{\nu}}$. Therefore, $\Bar{F}(\tau_0\ | \ \widehat{{\boldsymbol{\nu}}})$ is the MLE of $R(\tau_0\ | \ {\boldsymbol{\nu}})$. Now, the lot is accepted if   ${{R}}(\tau_0\ | \ \widehat{\boldsymbol{\nu}})>R_0$ and rejected if $\Bar{F}(\tau_0\ | \ \widehat{\boldsymbol{\nu}})\leq R_0$.
 Now, based on the competing risks data under ICS the likelihood function is given by
\begin{align}\label{like}
     L(\boldsymbol{\nu} \ | \ \boldsymbol{d})&\propto\prod_{m=1}^k\prod_{j=1}^J[G(j,\tau_m\ | \ {\boldsymbol{\nu}})-G(j,\tau_{m-1}\ | \ {\boldsymbol{\nu}})]^{d_{mj}}[\Bar{F}(\tau_{k}\ | \ {\boldsymbol{\nu}})]^{n-d_t}\nonumber\\
   &\propto\prod_{m=1}^k\prod_{j=1}^J \left[\frac{\nu_j}{\nu}\left\{\exp(-\nu \tau_{m-1})-\exp(-\nu \tau_m)\right\}\right]^{d_{mj}}\left[\exp(-\nu \tau_k)\right]^{n-d_t}\nonumber\\
   &\propto \prod_{m=1}^k\prod_{j=1}^J \left[\exp(-\nu \tau_{m-1})\right]^{d_{mj}}\left[\frac{\nu_j}{\nu}\left\{1-\exp(-\nu (\tau_m-\tau_{m-1}))\right\}\right]^{d_{mj}}\left[\exp(-\nu \tau_k)\right]^{n-d_t}.
\end{align}
The log-likelihood function is given by
\begin{align*}
    l(\boldsymbol{\nu}\ | \ \boldsymbol{d})=&\sum_{m=1}^k\sum_{j=1}^J\left[-d_{mj}\left\{\nu \tau_{m-1}+\log(\nu_j)-\log(\nu)+\log(1-\exp(-\nu(\tau_m-\tau_{m-1})))\right\}\right]\\
    &-(n-d_t)\nu \tau_k.
\end{align*}
 The likelihood equations are written as
\begin{align}\label{li}
   & \frac{\partial  l(\boldsymbol{\nu}\ | \ \boldsymbol{d}) }{\partial\nu_j}=0\nonumber\\
    \implies&\sum_{m=1}^k\frac{d_{mj}}{\nu_j}-\sum_{m=1}^kd_{m+}\left[\tau_{m-1}+\frac{1}{\nu}-\frac{(\tau_m-\tau_{m-1})\exp(-\nu(\tau_m-\tau_{m-1}))}{1-\exp(-\nu(\tau_m-\tau_{m-1}))}\right]-(n-d_t)\tau_k=0,\nonumber\\
    \implies&\frac{d_{+j}}{\nu_j}=\frac{d_t}{\nu}+g(\nu),
\end{align}
where $g(\nu)=\sum_{m=1}^kd_{m+}\left[\tau_{m-1}-\frac{(\tau_m-\tau_{m-1})\exp(-\nu(\tau_m-\tau_{m-1}))}{1-\exp(-\nu(\tau_m-\tau_{m-1}))}\right]+(n-d_t)\tau_k$, $d_{m+}=\sum_{j=1}^J d_{mj}$, $d_{+j}=\sum_{m=1}^kd_{mj}$ and $d_t=\sum_{m=1}^kd_{m+}=\sum_{j=1}^Jd_{+j}$, for $j=1,\cdots,J$. Solving  (\ref{li}), for $j=1,\ldots,J$, we get the MLEs $\widehat{\boldsymbol{\nu}}=(\widehat{\nu}_1,\ldots,\widehat{\nu}_J)$. Now, from the likelihood equations, we get
\begin{align}\label{re}
   & \frac{\widehat{\nu}_1}{d_{+1}}=\cdots=\frac{\widehat{\nu}_J}{d_{+J}}=\frac{\widehat{\nu}}{d_t}\nonumber\\
    \implies &\widehat{\nu}_{j}=\frac{d_{+j}\widehat{\nu}}{d_t},~~~~~j=1,\ldots,J,
\end{align}
where $\widehat{\nu}=\widehat{\nu}_1+\cdots+\widehat{\nu}_J$. From  (\ref{li}) using  (\ref{re}), we get $g(\widehat{\nu})=0$. It is easy to show that $\lim_{\nu\rightarrow0}g(\nu)=-\infty$ and $\lim_{\nu\rightarrow\infty}=\sum_{m=1}^kd_{m+}\tau_{m-1}+(n-d_t)\tau_k>0$. Also, we can see that $g(\nu)$ is a strictly increasing function in $\nu$. Hence, we get a unique solution of $\widehat{\nu}$. 
The decision function can be expressed as
\allowdisplaybreaks\begin{align}
    a(\boldsymbol{d}\ | \ \boldsymbol{\zeta})=\begin{dcases}
        1  & \exp(-\widehat{\nu}\tau_0)>R_0~~\\
        0 & \text{Otherwise},
    \end{dcases}
\end{align}
where $\widehat{\nu}$ is the root of the equation $g(\nu)=0$. Note that we only need to estimate $\nu$ for our decision function. It is seen that $\widehat{\nu}$ exists when $d_t>0$. When $d_t=0$, $\widehat{\nu}$ is taken as $1/n\tau_k$. If the inspection times are of equal length. i.e, $\tau_m=mh$, $m=1,\cdots,k$, where $h$ is the length of each interval. In that case, the explicit form of the MLE of ${\nu}$ is given by
\allowdisplaybreaks\begin{align}
    \widehat{\nu}=\begin{dcases*}
        -\frac{1}{h}\ln\left(1-\frac{d_t}{\sum_{m=1}^k md_{m+}+(n-d_t)k}\right)&$d_t>0$\\
        \frac{1}{nkh}&$d_t=0$.
    \end{dcases*} 
\end{align}
As a result, the decision function is expressed as if $d_t>0$,
\begin{align*}
    a(\boldsymbol{d}\ | \ \boldsymbol{\zeta})=\begin{dcases*}1 & if
       $\left(\frac{d_t}{\sum_{m=1}^k md_{m+}+(n-d_t)k}\right)^{\tau_0/h}> R_0$ \\
       0 & otherwise
    \end{dcases*}
\end{align*}
and if $d_t=0$,
\begin{align*}
    a(\boldsymbol{0}\ | \ \boldsymbol{\zeta})=\begin{dcases*}1 & if
       $\exp\left(-\frac{\tau_0}{nkh}\right)> R_0$ \\
       0 & otherwise.
    \end{dcases*}
\end{align*}
 In this case, the form of the decision function is known. Therefore, to find the optimal decision we need to determine the optimal value of $R_0$. Let $R_0^*$ be the optimal value of $R_0$. In that case $a^*\equiv R_0^*$. 
  Now, it is assumed that the failure rate of the items $\nu=\nu_1+\cdots+\nu_J$ follows gamma distribution with PDF 
\begin{align*}
    p(\nu)=\frac{\eta^{\alpha}}{\Gamma(\alpha)}\nu^{\alpha-1}\exp(-\eta\nu), ~~~\nu>0, \alpha>0, \eta>0
\end{align*}
and it is denoted by $G(\alpha,\eta)$. It is assumed that for given $\nu$, $(\nu_1/\nu,\ldots,\nu_J/\nu)$ follows Dirichlet distribution with JPDF 
\begin{align*}
p(\nu_1/\nu,\ldots,\nu_J/\nu)=\frac{\Gamma(\alpha_0)}{\prod_{j=1}^J\Gamma(\alpha_j)}\prod_{j=1}^J\left(\frac{\nu_1}{\nu}\right)^{\alpha_j-1}, ~~0<\frac{\nu_j}{\nu}<1, \alpha_j>0, \text{ for } j=1,\ldots,J,
\end{align*}
where $\nu_j/\nu$ is the ratio of the failure rate that fails due to $j^{th}$ cause and the failure rate of the item and $\alpha_0=\sum_{j=1}^J\alpha_j$. Therefore, the prior JPDF of $\boldsymbol{\nu}=(\nu_1,\ldots,\nu_J)$ is
\begin{align}
    p(\boldsymbol{\nu})= \frac{\eta^{\alpha}}{\Gamma(\alpha)}\nu^{\alpha-\alpha_0}\exp(-\eta\nu)\prod_{j=1}^J
    \frac{\Gamma(\alpha_0)}{\Gamma(\alpha_j)}\nu_j^{\alpha_j-1}.
\end{align}
This is known as the Gamma-Dirichlet distribution and is denoted by $GD(\alpha,\eta,\alpha_1,\ldots,\alpha_J)$. We provide the following results needed for computation of Bayes risk.
\begin{result}\label{r1}
     \begin{align*}
\int_{\nu_1=0}^\infty \cdots\int_{\nu_J=0}^\infty\nu^{\alpha-\alpha_0}\exp(-\eta \nu)\prod_{j=1}^J \nu_j^{\alpha_j-1}~d\nu_1\cdots d\nu_J=\frac{\Gamma(\alpha)}{\eta^\alpha\Gamma(\alpha_0)}\prod_{j=1}^J\Gamma(\alpha_j).
 \end{align*}
\end{result}
        \begin{result}\label{r2}
 If   a sample of size $n$ is put on the test  and inspected at an equal length of the interval $h$ up to the maximum number of inspections $k$, then the Bayes risk can be expressed as
 \begin{align*}R(\boldsymbol{\zeta},R_0)=C_r+nC_s+r_sn\left(\frac{\eta}{\eta+kh}\right)^{\alpha}+(C_I+C_{\tau}h)\left[k-\sum_{i=1}^{k-1}\sum_{j=1}^n\binom{n}{j}(-1)^j\left(\frac{\eta}{\eta+ijh}\right)^\alpha\right]+R_1(\boldsymbol{\zeta},R_0),
 \end{align*}
 where
\begin{align*}
    R_1(\boldsymbol{\zeta},R_0)=& \sum_{\boldsymbol{d}\in \mathcal{X}} \frac{n!}{d_{11}!d_{12}!\cdots d_{m1}!d_{k2}!(n-d_t)!} \sum_{i=0}^{d_t}(-1)^i\binom{d_t}{i}\left[(C_0-C_r)\frac{\eta^\alpha}{(\eta+ih+sh)^\alpha}\right.\\
    &\left.+{\Gamma(\alpha_0)}\left\{\frac{\alpha\eta^\alpha}{(\eta+ih+sh)^{\alpha+1}}\sum_{p=1}^J C_p\prod_{j=1}^J\frac{\Gamma(\alpha_j+d_{+j}+\delta_{pj})}{\Gamma(\alpha_j)\Gamma(\alpha_0+d_t+1)}\right.\right.\\
     +&\left.\left.\frac{\alpha(\alpha+1)\eta^\alpha}{(\eta+ih+sh)^{\alpha+2}}\underset{p\leq q}{\sum_{p=1}^J \sum_{q=1}^J}C_{pq}\prod_{j=1}^J\frac{\Gamma(\alpha_j+d_{+j}+\delta_{pj}+\delta_{qj})}{\Gamma(\alpha_j)\Gamma(\alpha_0+d_t+2)}\right\}\right]
    \end{align*}
    and $\delta_{ij}=\begin{dcases*}
            1 & \text{if} $i=j$\\
            0 & \text{otherwise}.
        \end{dcases*}$
\end{result} 
 \begin{proof}
 The expressions of $E[D_t]$, $E[M]$ and $E[\tau]$ are given in \citet{chen2015bayesian}. We only need to prove the term $R_1(\boldsymbol{\zeta},R_0)$.
 Using  (\ref{joint}),  when the inspection intervals are equal, the joint distribution of $\boldsymbol{D}$ can be expressed 
\allowdisplaybreaks\begin{align}
    P_{\boldsymbol{D}}( \boldsymbol{d}\ | \ \boldsymbol{\nu})
&=\frac{n!}{\prod\limits_{m=1}^k\prod\limits_{j=1}^Jd_{mj}!(n-d_t)!}\prod_{m=1}^k\prod_{j=1}^J\left\{\exp\left[-\nu d_{mj}(m-1)h\right]\left[1-\exp(-\nu h)\right]^{d_{mj}}\left(\frac{\nu_j}{\nu}\right)^{d_{mj}}\right\}\nonumber\\
&~~~\times\exp\left[-\nu(n-d_t)kh\right]\nonumber\\
&=\frac{n!}{\prod\limits_{m=1}^k\prod\limits_{j=1}^Jd_{mj}!(n-d_t)!}\sum_{i=0}^{d_t}(-1)^i \binom{d_t}{i}\prod_{j=1}^J\left(\frac{\nu_j}{\nu}\right)^{d_{+j}}\exp\left[-{(i+s)\nu_j h}\right],
\end{align}
where $s=nk-\sum_{m=1}^k\sum_{j=1}^J(k-(m-1))d_{mj}$. \\
We know that    $R_1(\boldsymbol{\zeta},a)=\sum_{\boldsymbol{d}\in \mathcal{X}}\int_{\boldsymbol{\nu}}\left[h(\boldsymbol{\nu})-C_r\right]~P_{\boldsymbol{D}}(\boldsymbol{d}\ | \ \boldsymbol{\nu})~p(\boldsymbol{\nu}) ~d\boldsymbol{\nu}$.
Now using Result \ref{r1}, we get the required expression of $R_1(\boldsymbol{\zeta},a)$. 
 \end{proof}\\
 It is noted that when length of inspection intervals are equal, the vector of decision variable can be expressed as $\boldsymbol{\zeta}=(n,h,k)$. 
For different inspection intervals, the term $R_1(\boldsymbol{\zeta},a)$ cannot be obtained analytically, a simple Monte Carlo integration can be used to evaluate the integration. For this,  a large number $N_1$ of observations are generated from the prior distributions $p(\boldsymbol{\nu})$.  If $p(\boldsymbol{\nu})$ follows $GD(\alpha,\eta,\alpha_1,\ldots,\alpha_J)$, then the data can be generated using Algorithm 1.\\
\textbf{Algorithm 1: }
\begin{enumerate}[I.]
    \item Generate $\nu^{(i)}$ from gamma distribution with parameters $(\alpha,\eta)$.
    \item For given $\nu^{(i)}$, generate
 $\frac{\boldsymbol{\nu}^{(i)}}{\nu^{(i)}}$ from Dirichlet distribution with parameters $(\alpha_1,\ldots,\alpha_J)$ using the built-in R packages "LaplacesDemon".
    \item Multiplying $\nu^{(i)}$ and $\frac{\boldsymbol{\nu}^{(i)}}{\nu^{(i)}}$, we get the required generated sample $\boldsymbol{\nu}^{(i)}$. 
\end{enumerate}
Let $\boldsymbol{\nu}^{(1)}\ldots,\boldsymbol{\nu}^{(N_1)}$ be the generated observations. Then 
     \begin{align*}
R_1(\boldsymbol{\zeta},a)\approx\frac{1}{N_1}\sum_{i=1}^{N_1}\left[C_r-h\left(\boldsymbol{\nu}^{(i)}\right)\right]\sum_{\boldsymbol{d}\in\mathcal{X}}P_{\boldsymbol{D}}\left(\boldsymbol{d}\ | \ \boldsymbol{\nu}^{(i)}\right).
    \end{align*}
    \subsection{Approximate form of Bayes risk using asymptotic properties of the MLE}\label{dec2}
For large samples and different prior distributions, computing Bayes risk will be intensive and complex. Due to complexity, we approximate the Bayes risk in a simplified form, where we can calculate Bayes risk for any choice of prior. Under certain regularity conditions, for large $n$, $\widehat{\boldsymbol{\nu}}$ follows normal distribution with mean $\boldsymbol{\nu}$ and variance-covariance matrix $I^{-1}(\boldsymbol{\nu})$, where $I^{-1}(\boldsymbol{\nu})$ is inverse of the FIM of $\boldsymbol{\nu}$. The regularity conditions are stated below:\\
\textbf{Regularity Conditions (\citet{das2024optimalplanningprogressivetypei}): } 
\allowdisplaybreaks\begin{enumerate}[I.]
    \item $X_1,X_2,\ldots,X_n$ are IID unobserved lifetimes with common PDF $f(\cdot \ | \ {\boldsymbol{\nu}})$ and  CDF $F(\cdot \ | \ {\boldsymbol{\nu}})$.
    \item The number of causes of failure is finite.
\item  $(X_1,C_1),(X_2,C_2),\ldots,(X_n,C_n)$ are IID with common sub-distribution function $G(\cdot,\cdot \ | \ {\boldsymbol{\nu}})$.
\item The supports of $f(\cdot \ | \ {\boldsymbol{\nu}})$ and $G(\cdot \ | \ {\boldsymbol{\nu}})$ is independent of $\boldsymbol{\nu}$.
    \item In the parameter space $\boldsymbol{\nu}$, the true parameter $\boldsymbol{\nu}^0$ is an interior point of the open set $\boldsymbol{\nu}_0$.
    \item For almost all $t$ and  $\forall$  $j=1,\ldots,J$, $F(t\ | \ {\boldsymbol{\nu}})$ and $G(j,t)\ | \ {\boldsymbol{\nu}})$ admit all third order derivatives $\frac{\partial^3 F(t\ | \ {\boldsymbol{\nu}})}{\partial\nu_u\partial\nu_v\partial\nu_w}$ and $\frac{\partial^3 G(j,t\ | \ {\boldsymbol{\nu}})}{\partial\nu_u\partial\nu_v\partial\nu_w}$, respectively  $\forall$  $\boldsymbol{\nu}\in \boldsymbol{\nu}_0$ and $w,u,v=1,\ldots,J$. Also, upto third-order derivatives of $G(j,t\ | \ {\boldsymbol{\nu}})$ and $F(t\ | \ {\boldsymbol{\nu}})$ wrt $\boldsymbol{\nu}$ are bounded  $\forall$  $\boldsymbol{\nu}\in\boldsymbol{\nu}_0$.
    \item In such a way, the $\tau_m$'s are chosen that
    \begin{enumerate}[(a)]
        \item $0<q_{mj}<1$ and $0<q_m<1$, where
      \begin{align}\label{qij}
    q_{mj}=\frac{P\left(X_i\le \tau_m,I=j\right)-P\left(X_i\le \tau_{m-1},I=j\right)}{P\left(X_i\geq \tau_{m-1}\right)}&=\frac{{G}\left(j,\tau_{m}\ | \ {\boldsymbol{\nu}}\right)-{G}\left(j,\tau_{m-1}\ | \ {\boldsymbol{\nu}}\right)}{\Bar{F}(\tau_{m-1}\ | \ {\boldsymbol{\nu}})}
\end{align} is the probability of an item failed by the time $\tau_m$ due to cause $j$ given that the item is at risk at time $\tau_{m-1}$ and \begin{align}\label{qi}
q_m=\frac{P\left(X_i \le \tau_{m}\right)-P\left(X_i \le \tau_{m-1}\right)}{P\left(X_i\geq \tau_{m-1}\right)}=\frac{{F}(\tau_{m}\ | \ {\boldsymbol{\nu}})-{F}(\tau_{m-1}\ | \ {\boldsymbol{\nu}})}{\Bar{F}(\tau_{m-1}\ | \ {\boldsymbol{\nu}})}.    
\end{align}
is the probability of an item failed by the time $\tau_m$ given that the item is at risk at time $\tau_{m-1}$, for $i=1,\ldots,n$, $j=1,\ldots,J$ and $m=1,\ldots,k$.
        \item $\nabla_{\boldsymbol{\nu}}\boldsymbol{q}$ is matrix of rank $l$, where $\nabla_{\boldsymbol{\nu}}\boldsymbol{q}=\left(\frac{\partial q_{i}}{\partial\nu_u}\right)_{s\times m}$ for $u=1,\ldots,J$ and $i=1,\ldots,m$.
    \end{enumerate}
    \end{enumerate}
    The FIM is given by $I(\boldsymbol{\nu})=(I_{u,v}(\boldsymbol{\nu}))$, where
\begin{align}
    I_{u,v}(\boldsymbol{\nu})=E\left[-\frac{\partial ^2 l(\boldsymbol{\nu})}{\partial\nu_u\partial\nu_v}\right]
\end{align}
 For interval-censored data, the FIM for $\boldsymbol{\nu}$ can be expressed as
    \begin{align}\label{fisher}
I(\boldsymbol{\nu})=\sum_{m=1}^k\left[\sum_{j=1}^J\frac{n[R(\tau_{m-1}\ | \ {\boldsymbol{\nu}})]} {q_{mj}}\nabla_{\boldsymbol{\nu}}(q_{mj})\nabla_{\boldsymbol{\nu}}(q_{mj})^T+\frac{n[R(\tau_{m-1}\ | \ {\boldsymbol{\nu}})]}{(1-q_{m})}\nabla_{\boldsymbol{\nu}}(q_{m})\nabla_{\boldsymbol{\nu}}(q_{m})^T\right].
    \end{align}
 The derivation of the form of the FIM and proof of the normal approximation of the MLEs are given in \citet{das2024optimalplanningprogressivetypei}.

Now we find the approximate expression of $P(\exp(-\widehat{\nu}\tau_0)>R_0)$. By using the delta method, we have    $c(\widehat{\boldsymbol{\nu}})\sim \mathcal{N}(c(\boldsymbol{\nu}),S(\boldsymbol{\nu})^2) $, where $c(\boldsymbol{\nu})=\exp(-\nu \tau_0)$, $S(\boldsymbol{\nu})^2=\nabla_{\boldsymbol{\nu}}c(\boldsymbol{\nu})^TI^{-1}(\boldsymbol{\nu})\nabla_{\boldsymbol{\nu}}c(\boldsymbol{\nu})$ and $\nabla_{\boldsymbol{\nu}}c(\boldsymbol{\nu})=\left(\frac{\partial c(\boldsymbol{\nu})}{\partial \nu_1},\ldots,\frac{\partial c(\boldsymbol{\nu})}{\partial \nu_J}\right)$.
    Therefore,
    \begin{align*}
     & P\left(\exp(-\widehat{\nu}\tau_0)>R_0\right)=  P\left(\frac{c(\widehat{\boldsymbol{\nu}})-c(\boldsymbol{\nu})}{S(\boldsymbol{\nu})}>\frac{R_0-c(\boldsymbol{\nu})}{S(\boldsymbol{\nu})}\right)\approx1-\Phi\left(\frac{R_0-c(\boldsymbol{\nu})}{S(\boldsymbol{\nu})}\right)=\Phi\left(\frac{c(\boldsymbol{\nu})-R_0}{S(\boldsymbol{\nu})}\right),
    \end{align*}
    where $\Phi(\cdot)$ is the CDF of standard normal distribution.\\
   For computation of $S(\boldsymbol{\nu})$, we need to compute $\nabla_{\boldsymbol{\nu}}(q_{mj})$ and $\nabla_{\boldsymbol{\nu}}(q_{m})$ to obtain $I\left(\boldsymbol{\nu}\right)$. For exponential  distribution, $q_{mj}$ given in (\ref{qij}) can be written as $q_{mj}=\frac{\nu_j}{\nu}\left[1-\exp\left\{-\nu\left(\tau_m-\tau_{m-1}\right)\right\}\right]$ and $q_m$ given in (\ref{qi}) can be written as $q_i=1-\exp\left[-\nu\left(\tau_m-\tau_{m-1}\right)\right]$, for $j=1,\ldots,J$ and $m=1,\ldots,k$. Therefore, 
     \begin{align*}
        \frac{\partial q_{mj}}{\partial\nu_l}=\begin{dcases*}
         \frac{\nu_j (\tau_m-\tau_{m-1})(1-q_m)}{\nu}-\frac{q_m\nu_j}{\nu^2},&$l\neq j$\\
          \frac{\nu_j (\tau_m-\tau_{m-1})(1-q_m)}{\nu}+\frac{q_m(\nu-\nu_j)}{\nu^2},& $l=j$,
        \end{dcases*}
    \end{align*}
    \begin{align*}
        \frac{\partial q_m}{\partial\nu_j}=(\tau_m-\tau_{m-1})\exp[-\nu(\tau_m-\tau_{m-1})]
    \end{align*}
 and 
     \begin{align*}
        \frac{\partial c(\boldsymbol{\nu})}{\partial\nu_j}=\tau_0\exp[-\nu \tau_0],
    \end{align*}
    for $m=1,\ldots,k$ and $l,j=1,\ldots,J$.
  \subsection{Finding optimal BSP} 
  Now, we provide upper bounds of $n$, $k$ and last inspection time point $\tau_k$. This shows that we get finite value of $\boldsymbol{\zeta}^*$. Since all cost-coefficients are non-negative and $C_s\geq r_s\geq 0$, this implies that $R(\boldsymbol{\zeta}^*,a)$ is non-negative. Therefore, the following equalities hold:
 \begin{align}
     R(\boldsymbol{\zeta}^*,a)\geq& ~n^*(C_s-r_s)+C_{\tau}E[\tau]+C_IE[I]\nonumber \\
     \geq &~n^*(C_s-r_s)+C_{\tau}\tau_k+C_IE[I]\label{i1}\\
     \geq& ~n^*(C_s-r_s)+C_Ik\label{i2}\\
     \geq&~ n^*(C_s-r_s) \label{i3}
 \end{align}
When the decision of acceptance or rejection about the lot is taken without life testing, that scenario is called no sampling.  For no sampling case, when the lot is accepted, $R(\boldsymbol{0},1)=E_{\nu}[h(\nu)]$ and when the lot is rejected, $R(\boldsymbol{0},0)=C_r$. Therefore,
 $R(\boldsymbol{0},a)=\min\{E_{\nu}[h(\nu)], C_r\}$. Now,
 \begin{align}\label{i4}
     R(\boldsymbol{\zeta}^*,a)\leq \min\{E_{\nu}[h(\nu)], C_r\}.
 \end{align}
 From (\ref{i3}) and (\ref{i4}), we get $$n^*\leq \frac{\min\{E_{\nu}[h(\nu)], C_r\}}{(C_s-r_s}.$$ Therefore, $n_0$ can be taken as $$n_0=\left\lfloor \frac{\min\{E_{\nu}[h(\nu)], C_r\}}{C_s-r_s}\right\rfloor,$$ where $\lfloor x\rfloor$ is the greatest integer less than or equal to x.\\
 From (\ref{i2}) and (\ref{i4}), we get $$k^*(n)= \frac{\min\{E_{\nu}[h(\nu)], C_r\}-(C_s-r_s)n}{C_I}.$$ Therefore, $k_0(n)$ can be taken as $$k_0(n)=\left\lfloor\frac{\min\{E_{\nu}[h(\nu)], C_r\}-(C_s-r_s)n}{C_I}\right\rfloor.$$
 From (\ref{i1}) and (\ref{i4}), we get $\tau_k^*(n,k(n))=\lfloor(\min\{E_{\nu}[h(\nu)], C_r\}-(C_s-r_s)n-C_Ik)/C_{\tau}$. Therefore, $\tau_k(n,k(n))$ can be taken as $$\tau_k(n,k(n))=\left\lfloor\frac{\min\{E_{\nu}[h(\nu)], C_r\}-(C_s-r_s)n-C_Ik}{C_{\tau}}\right\rfloor$$
 We consider the following algorithm to determine the optimal BSP. \\
\noindent \textbf{Algorithm 2:} 
\begin{enumerate}
    \item For each pair of values of $(n,k(n))$,  (\ref{R}) is minimized wrt $(\boldsymbol{\tau},R_0)$, where $\boldsymbol{\tau}=(\tau_1,\ldots,\tau_k)$. Since $\tau_1<\tau_2< \ldots< \tau_k<\tau_0(n,k(n))$, we consider a vector $\boldsymbol{h}=(h_1,\ldots,h_k)$ such that $\tau_1=h_1$, $\tau_{i}=\tau_{i-1}+h_i$, for $i=2,\ldots,k$ and $\sum_{i=1}^kh_i<\tau_0(n,k(n))$. For equal length of intervals, we take $h_i=h$, for $i=1,\ldots,k$ and $h<\tau_0(n,k(n))/k(n)$. Now,  minimize  (\ref{R}) wrt $\boldsymbol{h}$, where $h_i>0$, for $i=1,\ldots,k$, $R_0>0$ and $\sum_{i=1}^kh_i<\tau_0(n,k(n))$. The optimal $\boldsymbol{\tau}$ is given by $\boldsymbol{\tau}^*(n,k(n))=(\tau_1^*(n,k(n)),\tau_2^*(n,k(n)),\ldots,\tau^*_k(n,k(n))$, where for $i=1,2,\ldots,k$, $\tau^*_{i}(n,k)=\tau^*_{i-1}(n,k)+h^*_{i}(n,k)$. Let $h_i^*(n,k(n))$ denotes optimal value of $h_i(n,k(n))$, $i=1,\ldots,k$ and $\tau^*_{0}(n,k)=0$. Then we have
    \begin{flalign*}
       &R(n,\boldsymbol{\tau}^*(n,k(n)),k,R_0^*(n,k(n))) =\min_{(\boldsymbol{\tau},R_0),\tau_1< \cdots,\tau_k<\tau_0(n,k(n))} R(n,\boldsymbol{\tau},k,R_0). &&
    \end{flalign*}
    \item By minimizing  (\ref{R}) wrt $k$, we find an integer $k^*(n)$, $0\leq k^*(n)\leq k_0(n)$, for each value of $n$. i.e.,
    \begin{flalign*}
       & R(n,\boldsymbol{\tau}(n,k^*(n)),k^*(n),R_0^*(n,k^*(n)))=\min_{0\leq k\leq k(n)} R(n,\boldsymbol{\tau}^*(n,k(n)),k(n),R_0^*(n,k(n))).&&
    \end{flalign*}
    \item Finally, by minimizing  (\ref{R}), we find $n^*$, $0\leq n^*\leq n_0$. i.e.,
    \begin{flalign*}
       & R(n^*,\boldsymbol{\tau}^*(n^*,k^*(n^*)),k^*(n^*),R_0^*(n^*,k^*(n^*)))=\min_{0<n\leq n_0} R(n,\boldsymbol{\tau}^*(n,k^*(n)),k^*(n),R_0^*(n,k^*(n))).&&
    \end{flalign*}
    \end{enumerate}

Now, we write $R_0^*(n^*,k^*(n^*))=R_0^*$, $\boldsymbol{\tau}^*(n^*,k^*(n^*))=\boldsymbol{\tau}^*$ and ${k}^*(n^*)=k^*$. Therefore, $(\boldsymbol{\zeta}^*,R_0^*)$ is the optimal BSP when the form of the decision function is known. \\

    \section{BSP based on Bayes decision function}\label{bayesa}
Here, we consider that the decision criteria is unknown to the manufacturer. Therefore, the decision function is arbitrary. It is also assumed that the manufacturer and consumer use a single prior assessment. An optimal decision function is derived that minimizes the Bayes risk among all the decision functions. This optimal function is known as a Bayes decision
function. In  (\ref{R}), it is seen that the decision function $a(\cdot\ | \ \boldsymbol{\zeta})$ is involved only in the term $R_1(\boldsymbol{\zeta},a)$ of the Bayes risk $R(\boldsymbol{\zeta},a)$. Therefore, to find the Bayes decision function, $R_1(\boldsymbol{\zeta},a)$ is minimized wrt $a$ for fixed $\boldsymbol{\zeta}.$ Note that
$R_1(\boldsymbol{\zeta},a)$ in  (\ref{R1}) can be expressed as
\allowdisplaybreaks\begin{align*}
  R_1(\boldsymbol{\zeta},a)  =&E_{\boldsymbol{\nu}}E_{\boldsymbol{d}\ | \ \boldsymbol{\nu}} [a(\boldsymbol{d}\ | \ \boldsymbol{\zeta})(h(\boldsymbol{\nu})-C_r)]\\
   =&E_{\boldsymbol{d}}E_{\boldsymbol{\nu}\ | \ \boldsymbol{d}} [a(\boldsymbol{d}\ | \ \boldsymbol{\zeta})(h(\boldsymbol{\nu})-C_r)]\\
=&\sum_{\boldsymbol{d}\in\mathcal{X}}[a(\boldsymbol{d}\ | \ \boldsymbol{\zeta})]E_{\boldsymbol{\nu}\ | \ \boldsymbol{d}}[h(\boldsymbol{\nu})-C_r]~P_{\boldsymbol{D}}{( (\boldsymbol{d})}\\
=&\sum_{\boldsymbol{d}\in \mathcal{X}}[a(\boldsymbol{d}\ | \ \boldsymbol{\zeta})]\left\{\int_{\boldsymbol{\nu}}h(\boldsymbol{\nu})~p_{\boldsymbol{\nu}\ | \ \boldsymbol{D}}( \boldsymbol{\nu}\ | \  \boldsymbol{d}) ~d\boldsymbol{\nu}\right\}P_{\boldsymbol{D}}(\boldsymbol{d})\\
=&\sum_{\boldsymbol{d}\in \mathcal{X}}[a(\boldsymbol{d}\ | \ \boldsymbol{\zeta})][\varphi(\boldsymbol{d})-C_r]~P_{\boldsymbol{D}}(\boldsymbol{d}),
\end{align*}
where $p_{\boldsymbol{\nu}\ | \ \boldsymbol{D}}( \boldsymbol{\nu}\ | \  \boldsymbol{d})$ is the posterior JPDF of $\boldsymbol{\nu}$ and it is given by \begin{align*}
    p_{\boldsymbol{\nu}\ | \ \boldsymbol{D}}( \boldsymbol{\nu}\ | \  \boldsymbol{d})=\frac{L(\boldsymbol{\nu}\ | \ \boldsymbol{d})~p(\boldsymbol{\nu})}{\int_{\boldsymbol{\nu}}L(\boldsymbol{\nu}\ | \ \boldsymbol{d})~p(\boldsymbol{\nu})~d\boldsymbol{\nu}},
\end{align*}
$\varphi(\boldsymbol{d})=\int_{\boldsymbol{\nu}}h(\boldsymbol{\nu})~p_{\boldsymbol{\nu}\ | \ \boldsymbol{D}}( \boldsymbol{\nu}\ | \  \boldsymbol{d}) ~d\boldsymbol{\nu}$ and $P_{\boldsymbol{D}}(\boldsymbol{d})=\int_{\boldsymbol{\nu}}~P_{\boldsymbol{D}}(\boldsymbol{d}\ | \ \boldsymbol{\nu})~p(\boldsymbol{\nu})~d\boldsymbol{\nu}$.\\
Now we consider two cases:\\
\textbf{Case 1: } $\varphi(\boldsymbol{d})\leq C_r$:\\
If $a(\boldsymbol{d}\ | \ \boldsymbol{\zeta})=1$, then $ E_{\boldsymbol{d}}E_{ \boldsymbol{\nu}\ | \ \boldsymbol{d}} [a(\boldsymbol{d}\ | \ \boldsymbol{\zeta})(h(\boldsymbol{\nu})-C_r)]\leq 0$ and if $a(\boldsymbol{d}\ | \ \boldsymbol{\zeta})=0$, then $ E_{\boldsymbol{d}}E_{ \boldsymbol{\nu}\ | \ \boldsymbol{d}} [a(\boldsymbol{d}\ | \ \boldsymbol{\zeta})(h(\boldsymbol{\nu})-C_r)]=0$\\
\textbf{Case 2: }$\varphi(\boldsymbol{d})> C_r$:\\
If $a(\boldsymbol{d}\ | \ \boldsymbol{\zeta})=1$, then $ E_{\boldsymbol{d}}E_{ \boldsymbol{\nu}\ | \ \boldsymbol{d}} [a(\boldsymbol{d}\ | \ \boldsymbol{\zeta})(h(\boldsymbol{\nu})-C_r)]\geq0$ and if $a(\boldsymbol{d}\ | \ \boldsymbol{\zeta})=0$, then $ E_{\boldsymbol{d}}E_{ \boldsymbol{\nu}\ | \ \boldsymbol{d}} [a(\boldsymbol{d}\ | \ \boldsymbol{\zeta})(h(\boldsymbol{\nu})-C_r)]=0$\\
Therefore, for each fixed $\boldsymbol{\zeta}$, if we take $ \varphi(\boldsymbol{d})-C_r\leq 0$ when $a(\boldsymbol{d}\ | \ \boldsymbol{\zeta})=1$ and $ \varphi(\boldsymbol{d})-C_r$ $\geq 0$ when $a(\boldsymbol{d}\ | \ \boldsymbol{\zeta})=0$,
then $E_{\boldsymbol{\nu}}E_{\boldsymbol{d}\ | \ \boldsymbol{\nu}} [a(\boldsymbol{d}\ | \ \boldsymbol{\zeta})(h(\boldsymbol{\nu})-C_r)]$ is minimized wrt $a(\cdot\ | \ \boldsymbol{\zeta})$.
Therefore, for fixed $\boldsymbol{\zeta}$, the Bayes decision function $a^*(\boldsymbol{d}\ | \ \boldsymbol{\zeta})$ is given by,
\begin{align*}
    a^*(\boldsymbol{d}\ | \ \boldsymbol{\zeta})=\begin{cases}
        1 &\text{ if } C_r-\varphi(\boldsymbol{d})\geq 0\\
        0 &\text{ otherwise}.
    \end{cases}
\end{align*} 
For no sampling case, the designing parameter $\boldsymbol{\zeta}$ and observed data $\boldsymbol{d}$ are taken as $\boldsymbol{0}$. The prior expectation of  $h(\boldsymbol{\nu})$ is 
\begin{align*} E[h(\boldsymbol{\nu})]=~
   &C_0+\sum_{p=1}^J C_p\frac{\alpha(\alpha_p+1)}{\eta\alpha_0} +\frac{\alpha(\alpha+1)}{\eta^{2}\alpha_0(\alpha_0+1)}\left[\underset{p< q}{\sum_{p=1}^J \sum_{q=1}^J}C_{pq}{(\alpha_p+1)(\alpha_q+1)}+\sum_{p=1}^J C_{pp}(\alpha_p+2)\right]\\
   =~&\varphi(\boldsymbol{0}).
\end{align*}
Therefore, the Bayes decision for the no sampling case can be expressed as
\begin{align*}
    a^*(\boldsymbol{0}\ | \ \boldsymbol{0})=\begin{cases}
        1 &\text{ if } C_r-\varphi(\boldsymbol{0})\geq 0\\
        0 &\text{ otherwise}.
    \end{cases}
\end{align*}
\begin{result}\label{r3}
    If a sample of size $n$ is put on the life test and the items are inspected at an equal length of the interval $h$, then the posterior expectation of the loss function $h(\boldsymbol{\nu})$ can be expressed as
\begin{align*}
    \phi(\boldsymbol{d})=&C_0+N_C\sum_{p=1}^J C_p \left[\sum_{i=0}^{d_t}(-1)^i\binom{d_t}{i}\frac{\Gamma(\alpha+1)}{[(i+s)h+\eta]^{\alpha+1}}\prod_{j=1}^J\frac{\Gamma(\alpha_j+d_{+j}+\delta_{pj})}{\Gamma(\alpha_0+d_t+1)}\right]\\
    &+N_C\underset{p\leq q}{\sum_{p=1}^J \sum_{q=1}^J}C_{pq}\left[\sum_{i=0}^{d_t}(-1)^i\binom{d_t}{i}\frac{\Gamma(\alpha+2)}{[(i+s)h+\eta]^{\alpha+2}}\prod_{j=1}^J\frac{\Gamma(\alpha_j+d_{+j}+\delta_{pj}+\delta_{qj})}{\Gamma(\alpha_0+d_t+2)}\right],
\end{align*}
where  \begin{align*}
     N_C=\left[\sum_{i=0}^{d_t}(-1)^i\binom{d_t}{i}\frac{\Gamma(\alpha)}{[(i+s)h+\eta]^{\alpha}}\prod_{j=1}^J\frac{\Gamma(\alpha_j+d_{+j})}{\Gamma(\alpha_0+d_t)}\right]^{-1}.
 \end{align*}
\end{result}
\begin{proof}
  \noindent Using  (\ref{like}),  when the inspection intervals are equal, the likelihood function can be expressed as
\allowdisplaybreaks\begin{align}
    L(\boldsymbol{\nu}  \ | \ \boldsymbol{d})
&\propto\prod_{m=1}^k\prod_{j=1}^J\left\{\exp\left[-\nu d_{mj}(m-1)h\right]\left[1-\exp(-\nu h)\right]^{d_{mj}}\left(\frac{\nu_j}{\nu}\right)^{d_{mj}}\right\}\exp\left[-\nu(n-d_t)kh\right]\nonumber\\
&\propto\sum_{i=0}^{d_t}(-1)^i \binom{d_t}{i}\prod_{j=1}^J\left(\frac{\nu_j}{\nu}\right)^{d_{+j}}\exp\left[-{(i+s)\nu_j h}\right],
\end{align}
where $s=nk-\sum_{m=1}^k\sum_{j=1}^J(k-(m-1))d_{mj}$.
The prior of $\boldsymbol{\nu}$ follows Gamma-Dirichlet distribution with JPDF 
\begin{align*}
    p(\boldsymbol{\nu})= \frac{\eta^\alpha}{\Gamma(\alpha)}\nu^{\alpha-\alpha_0}\exp(-\eta \nu)\prod_{j=1}^J
    \frac{\Gamma(\alpha_0)}{\Gamma(\alpha_j)}\nu_j^{\alpha_j-1}.
\end{align*}
The posterior JPDF of $\boldsymbol{\nu}$ is
\begin{align*}
    p(\boldsymbol{\nu}\ | \ \boldsymbol{d})= ~N_C
   \sum_{i=0}^{d_t}(-1)^i\binom{d_t}{i}\nu^{\alpha-(\alpha_0+d_t)}\prod_{j=1}^J\nu_j^{\alpha_j+d_{+j}-1}\exp\left[-(\eta+(i+s)h)\nu_j\right],
\end{align*}
where $N_C$ is the normalizing constant of the posterior distribution, which is given by
 \begin{align*}
     N_C=\left[\sum_{i=0}^{d_t}(-1)^i\binom{d_t}{i}\frac{\Gamma(\alpha)}{[(i+s)h+\eta]^{\alpha}}\prod_{j=1}^J\frac{\Gamma(\alpha_j+d_{+j})}{\Gamma(\alpha_0+d_t)}\right]^{-1}.
 \end{align*}
Using Result \ref{r1}, we get the desired result.
\end{proof}\\
For the sampling case, when the length of inspection intervals are unequal, the posterior expectation of $h(\boldsymbol{\nu})$ cannot be obtained analytically. A simple Monte Carlo integration can be used to evaluate it. For this,  a large number $N_2$ of observations are generated from the prior distributions $p(\boldsymbol{\nu})$. Let $\boldsymbol{\nu}^{(1)}\ldots,\boldsymbol{\nu}^{(N_2)}$ be the generated observations. Then 
\begin{align}\label{125}
  \varphi(\boldsymbol{d})\approx  \frac{ \sum_{j=1}^{N_2}h\left(\boldsymbol{\nu}^{(j)}\right)L\left(\boldsymbol{\nu}^{(j)}\ |\ \boldsymbol{d}\right)}{\sum_{j=1}^{N_2}L\left(\boldsymbol{\nu}^{(j)}\ |\ \boldsymbol{d}\right)}
\end{align}
After finding the Bayes decision function, we obtain Bayes risk and the procedure of finding Bayes risk is similar to Section 4. Next, we provide an algorithm for finding optimal BSP under this scenario. \\
\textbf{Algorithm 3}
\begin{enumerate}
    \item Using the derivation of the Bayes decision function discussed in Section \ref{bayesa}, we find the Bayes decision function $a^*(\cdot\ | \ \boldsymbol{\zeta})$ that minimized $R(\boldsymbol{\zeta})$ among all class of decision functions  $a(\cdot\ | \ \boldsymbol{\zeta})$ for each $\boldsymbol{\zeta}=(n, \boldsymbol{\tau}, k)$. 
    \item For each pair of values of $(n,k(n))$,  (\ref{R}) is minimized wrt $\boldsymbol{\tau}$. The procedure is similar to step 2 of Algorithm 2. that is, 
    \begin{flalign*}
       &R(n,\boldsymbol{\tau}^*(n,k(n)),k,a^*(\cdot \ | \ (n,\boldsymbol{\tau}^*(n,k),k))) =\min_{\boldsymbol{\tau}:~\tau_1< \cdots,\tau_k} R(n,\boldsymbol{\tau},k,a^*(\cdot \ | \ (n,\boldsymbol{\tau},k))). &&
    \end{flalign*}
    \item  By minimizing  (\ref{R}) wrt $k$, we find an integer $k^*(n)$, $0\leq k^*(n)\leq k_0(n)$, for each value of $n$, that is,
    \begin{flalign*}
       & R(n,\boldsymbol{\tau}(n,k^*(n)),k^*(n),a^*(\cdot \ | \ (n,\boldsymbol{\tau}^*(n,k^*(n)),k^*(n))))&&\\
        &~~~~=\min_{0\leq m\leq n} R(n,\boldsymbol{\tau}^*(n,k(n)),k(n),a^*(\cdot \ | \ (n,\boldsymbol{\tau}^*(n,k(n)),k))).&&
    \end{flalign*}
    \item Finally, by minimizing  (\ref{R}), we find $n^*$, $0\leq n^*\leq n_0$, that is,
    \begin{flalign*}
       & R(n^*,\boldsymbol{\tau}^*(n^*,k^*(n^*)),k^*(n^*),a^*(\cdot \ | \ (n^*,\boldsymbol{\tau}^*(n^*,k^*(n^*)),k^*(n^*))))&&\\
       &~~~~ =\min_{0<n\leq n_0} R(n,\boldsymbol{\tau}^*(n,k^*(n)),k^*(n),a(\cdot \ | \ (n,\boldsymbol{\tau}^*(n,k^*(n)),k^*(n)))).&&
    \end{flalign*}
    \end{enumerate}
    Now, we write $a^*(\cdot \ | \ (n^*, \boldsymbol{\tau}^*(n^*,k^*(n^*)),k^*(n^*)))=a^*$, $\boldsymbol{\tau}^*(n^*,k^*(n^*))=\boldsymbol{\tau}^*$ and $\boldsymbol{k}^*(n^*)=k^*$. Therefore,  $(\boldsymbol{\zeta}^*,a^*)=(n^*,\boldsymbol{\tau}^*,k^*,a^*)$ is the optimal BSP when the decision function is arbitrary. \\
       \begin{theorem}
        The BSP $(\boldsymbol{\zeta}^*,a^*)$ is the optimal BSP among all BSPs.
    \end{theorem}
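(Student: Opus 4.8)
The plan is to prove the theorem by a standard iterated-minimization (separation) argument, exploiting the fact that the Bayes risk splits so that the decision function and the sampling parameters may be optimized sequentially without loss of global optimality. An arbitrary BSP is a pair $(\boldsymbol{\zeta},a)$, and by definition the optimal BSP is one attaining the minimum Bayes risk $\min_{(\boldsymbol{\zeta},a)}R(\boldsymbol{\zeta},a)$. The first step I would record is the elementary identity
\[
\min_{(\boldsymbol{\zeta},a)}R(\boldsymbol{\zeta},a)=\min_{\boldsymbol{\zeta}}\Big[\min_{a}R(\boldsymbol{\zeta},a)\Big],
\]
which holds with no additional hypotheses; the entire argument then reduces to showing that the construction in Section \ref{bayesa} together with Algorithm 3 realizes the right-hand side.

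For the inner minimization I would invoke the structure of $R(\boldsymbol{\zeta},a)$ in (\ref{R}): for fixed $\boldsymbol{\zeta}$ the decision function $a$ enters $R$ only through the term $R_1(\boldsymbol{\zeta},a)$, while $C_r+n(C_s-r_s)+C_{\tau}E[\tau]+C_IE[M]+r_sE[D_t]$ is constant in $a$. Hence minimizing $R(\boldsymbol{\zeta},a)$ over all decision functions is equivalent to minimizing $R_1(\boldsymbol{\zeta},a)$, and the two-case pointwise analysis of Section \ref{bayesa} shows that the Bayes decision function $a^*(\cdot\ | \ \boldsymbol{\zeta})$ renders each summand $[a(\boldsymbol{d}\ | \ \boldsymbol{\zeta})][\varphi(\boldsymbol{d})-C_r]P_{\boldsymbol{D}}(\boldsymbol{d})$ nonpositive and as small as possible. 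Since for fixed $\boldsymbol{\zeta}$ the data $\boldsymbol{d}$ ranges over the finite set $\mathcal{X}$, the infimum over $a$ is attained, and
\[
R\big(\boldsymbol{\zeta},a^*(\cdot\ | \ \boldsymbol{\zeta})\big)=\min_{a}R(\boldsymbol{\zeta},a).
\]

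For the outer minimization I would introduce the profiled Bayes risk $g(\boldsymbol{\zeta}):=R(\boldsymbol{\zeta},a^*(\cdot\ | \ \boldsymbol{\zeta}))$ and note that Algorithm 3 is precisely a minimization of $g$ over $\boldsymbol{\zeta}=(n,\boldsymbol{\tau},k)$. The finite search region is legitimate because the chain (\ref{i1})--(\ref{i4}) supplies the finite upper bounds $n_0$, $k_0(n)$ and $\tau_k(n,k)$, so that $n$ and $k$ range over finite index sets and, for each admissible pair, $\boldsymbol{\tau}$ ranges over a bounded region; minimizing the continuous objective there, together with the no-sampling boundary value $R(\boldsymbol{0},a)=\min\{E_{\boldsymbol{\nu}}[h(\boldsymbol{\nu})],C_r\}$, yields a minimizer $\boldsymbol{\zeta}^*$ with $g(\boldsymbol{\zeta}^*)=\min_{\boldsymbol{\zeta}}g(\boldsymbol{\zeta})$.

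Combining the two steps and writing $a^*=a^*(\cdot\ | \ \boldsymbol{\zeta}^*)$, I would conclude
\[
R(\boldsymbol{\zeta}^*,a^*)=g(\boldsymbol{\zeta}^*)=\min_{\boldsymbol{\zeta}}\min_{a}R(\boldsymbol{\zeta},a)=\min_{(\boldsymbol{\zeta},a)}R(\boldsymbol{\zeta},a),
\]
which is exactly the assertion that $(\boldsymbol{\zeta}^*,a^*)$ is optimal among all BSPs. The main obstacle here is not the algebra but the existence/attainment questions in the outer step: one must verify that the bounded continuous minimization over $\boldsymbol{\tau}$ actually attains its minimum (continuity of $g$ in $\boldsymbol{\tau}$ on the closed bounded region, or, in the equal-length case where $\boldsymbol{\tau}$ collapses to the single parameter $h$, a direct argument), and that the no-sampling case $\boldsymbol{\zeta}=\boldsymbol{0}$ is correctly retained as a competing candidate, so that the claimed value is a genuine global minimum rather than merely an infimum.
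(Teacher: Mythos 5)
Your proposal is correct and takes essentially the same route as the paper: the paper's proof writes your iterated minimization $\min_{\boldsymbol{\zeta}}\bigl[\min_{a}R(\boldsymbol{\zeta},a)\bigr]$ out termwise as a telescoping sum of four differences (over $a$, then $\boldsymbol{\tau}$, then $k$, then $n$), each nonnegative by the construction of the Bayes decision function and the successive steps of Algorithm 3. Your explicit attention to attainment of the minima and to the finite search bounds from (\ref{i1})--(\ref{i4}) is a minor refinement of, not a departure from, the paper's argument.
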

    \begin{proof}
        It is enough to prove that for any BSP $\boldsymbol{\zeta}$, the following inequality holds:
        \begin{align*}   R(\boldsymbol{\zeta},a)\geq R(\boldsymbol{\zeta}^*,a^*).
        \end{align*}
        Now,
     \begin{align*} R(\boldsymbol{\zeta},a)-R(\boldsymbol{\zeta}^*,a^*)=&[R(n,\boldsymbol{\tau},m,a)-R(n,\boldsymbol{\tau},m,a^*)]+[R(n,\boldsymbol{\tau},m,a^*)-R(n,\boldsymbol{\tau}^*,m,a^*)]\\
            &+[R(n,\boldsymbol{\tau}^*,m,a^*)-R(n,\boldsymbol{\tau}^*,k^*,a^*)]+[R(n,\boldsymbol{\tau}^*,k^*,a^*)-R(n^*,\boldsymbol{\tau}^*,k^*,a^*)].
        \end{align*}
        From Algorithm 3, we get   
  $[R(n,\boldsymbol{\tau},m,a)-R(n,\boldsymbol{\tau},m,a^*)]\geq 0$, $[R(n,\boldsymbol{\tau},m,a^*)-R(n,\boldsymbol{\tau}^*,m,a^*)]$ $\geq 0$, $[R(n,\boldsymbol{\tau}^*,m,a^*)-R(n,\boldsymbol{\tau}^*,k^*,a^*)]\geq 0$ and $[R(n,\boldsymbol{\tau}^*,k^*,a^*)-R(n^*,\boldsymbol{\tau}^*,k^*,a^*)]\geq 0$.
  Therefore, $[R(\boldsymbol{\zeta},a)\geq R(\boldsymbol{\zeta}^*,a^*)]\geq 0$. Hence, the proof is complete.
    \end{proof}
\section{Numerical Example}\label{num}
Here we provide some examples of determining optimum BSPs. We consider two causes of failure, that is, $J=2$.\\
\textbf{Example 1:} We consider the hyperparameter values of prior distribution as $\alpha=2.8$, $\eta=1$, $\alpha_1=1.5$ and $\alpha_2=1.8$. The  acceptance cost coefficient are taken as $C_0=2$, $C_1=C_2=4$ and $C_{11}=C_{12}=C_{22}=4$. The other cost coefficients are taken as $C_r=40$, $C_s=0.5$, $r_s=0.25$, $C_{\tau}=0.3$ and $C_I=0.1$. $\tau_0=0.1$. Here we denote the BSP as type-I when the decision function is arbitrary and it as type-II when the form of the decision function is known. The optimal sampling schemes corresponding to type I and II are provided in Table \ref{optimal}. Also, in Table \ref{optimal}, we provide the probability of accepting the lot for the optimal BSP, which is denoted by $P(A)$ and is given by
\begin{align*}
    P(A)= \sum_{\boldsymbol{d}\in\mathcal{X}}\int_{\boldsymbol{\nu}} P_{\boldsymbol{D}}(\boldsymbol{d}\ | \ \boldsymbol{\nu})p(\boldsymbol{\nu})~ d\boldsymbol{\nu}.
\end{align*}
 Bayes risk, $E[\tau^*]$, $E[M^*]$ and $E[D_t^*]$  for the optimal BSP are provided in Table \ref{optimal}.
\begin{table}[hbt!]
    \centering
      \caption{Optimal BSP }
    \begin{tabular}{|c|ccccccc|}
    \hline
  Type& $\boldsymbol{\zeta}^*=(n^*,h^*,k^*)$&$R_0^*$&$P(A)$&$E[D_t^*]$&$E[\tau^*]$&$E[I^*]$& Bayes Risk\\
    \hline
     I& (4, 0.30,3)&- &0.489& 3.337& 0.740& 2.467&33.90826\\
    II& (4, 0.30,  3) &0.76& 0.489& 3.337& 0.740& 2.467&33.90826\\
\hline
    \end{tabular}
    \label{optimal}
\end{table}
\subsection{Effect of the parameters}
Here, we study the effect of parameters on designing variables of optimal BSP. The effect of the parameters $C_b$, $C_I$, $C_{\tau}$, $\eta$ and $\alpha$ are tabulated in Tables \ref{cb}, \ref{ck}, \ref{ct}, \ref{beta} and \ref{alpha}, respectively.
\begin{table}[hbt!]
    \centering
      \caption{{Optimal BSPs for  different values of $C_b$}}
    \begin{tabular}{|c|cccccccc|}
    \hline
  $C_b$&Type& $\boldsymbol{\zeta}^*=(n^*,h^*,k^*)$&$R_0^*$&$P(A)$&$E[D_t^*]$&$E[\tau^*]$&$E[I^*]$& Bayes Risk\\
 %   \hline
 %  \multirow{3}{*}{0}     &0&(75,  0.74,  0.04,  4) &  0.564& 94.60& 73.31&  2.55&  3.44&0.06123\\
 %  &0.5&(87, 0.47, 0.91, 15) &  0.600& 94.42& 27.54 &1.06& 2.26& 0.05492\\ 
 %   &1&(85, 0.35,  0.42, 14)  &0.629& 94.86&34.79&  2.33& 6.65& 0.04905\\
    
     \hline 
 \multirow{2}{*}{20}     &I&(0, 0, 0)&-&0&0&0&0& 20\\
 &II&(0, 0, 0)&1&0&0&0&0&20 \\
 \hline
    \multirow{2}{*}{30}     &I&(3, 0.41, 3)&-&0.473 &2.682& 0.855& 2.085&  28.07\\
    &II&(3,  0.41,  3)&0.80& 0.333& 2.682 &0.855& 2.085& 28.07 \\
       \hline
 \multirow{2}{*}{50}  &I& (6, 0.27,  2)&-&0.528 &4.209 &0.525& 1.944& 38.21502\\
 &II& (6,  0.27,  2) & 0.72& 0.620& 4.209 &0.525& 1.944&38.21517 \\
   
   \hline
 
   \multirow{2}{*}{60}     &I&(7,  0.21,  2)&-&0.529 &4.378 &0.416 &1.982&41.44479 \\
  &II&(7,  0.22,  2) &  0.68& 0.714& 4.478& 0.435& 1.978&41.44624 \\
   \hline
  \multirow{2}{*}{70} &I&(3,  0.19, 2)&-&0.726&2.98&0.38&1.96& 44.69447\\
&II&(3, 0.19, 2)&0.52&0.726&2.98&0.38&1.96& 44.69447 \\
 \hline 
    \multirow{2}{*}{90} &I&(0, 0, 0)&-&1&&&& 47.6619\\
 &II&(0,0,0)&0&1&0&0&0&47.6619 \\
   \hline 
    \end{tabular}
    \label{cb}
\end{table}\\
  In Table \ref{cb}, it is seen that the optimal solution is $(0,0,0)$ when $C_b=20$ and $C_b=90$. This is the no sampling case. For type-I BSP, when $C_b=20$,  the Bayes risk is equal to $C_r$, this means that the lot is rejected without life testing. When $C_b=90$, the Bayes risk is equal to $\varphi(0,0,0)$, which means that the lot is accepted without life testing.  In Table \ref{cb}, it is seen that the sample size increases up to a certain value and after that it decreases, when $C_b$ increases. This is due to the fact that the distance between $C_b$ and prior expectation $\varphi(\boldsymbol{0})$ is decreasing with $C_b$ on $(-\infty, 47.66)$ and increasing with $C_b$ on $(47.66,\infty)$. Also, it is seen that the probability of acceptance and Bayes risk increases with $C_b$ as expected.
\begin{table}[hbt!]
    \centering
      \caption{Optimal BSPs for different values of $C_I$ }
    \begin{tabular}{|c|cccccccc|}
    \hline
  $C_I$&$Type$& $\boldsymbol{\zeta}^*=(n^*,h^*,m^*)$&$R_0^*$&$P(A)$&$E[D_t^*]$&$E[\tau^*]$&$E[I^*]$& Bayes Risk\\
    \hline 
 \multirow{3}{*}{0}     &I&(5, 0.15, 6)&-& 0.510& 4.171& 0.732& 4.878& 33.57594\\
 &II& (5, 0.16, 5)&0.76& 0.515& 4.036& 0.680& 4.252&33.63195 \\
   \hline
 
   \multirow{3}{*}{0.2}     &I&(5, 0.47, 1)&-& 0.50 &3.30& 0.47 &1.00&  34.09155 \\
  &II& (5,  0.47,  1) &0.72& 0.50& 3.30& 0.47& 1.00&34.09155 \\
   \hline
   \multirow{3}{*}{0.3} &I&(5, 0.47, 1)&-& 0.50 &3.30& 0.47 &1.00&  34.19155 \\
 &II&(5, 0.47,  1)  &0.72& 0.50& 3.30& 0.47& 1.00& 34.19155 \\
   \hline 
    \multirow{3}{*}{1} &I& (5,  0.47,  1)&-& 0.50 &3.30& 0.47 &1.00&   34.89155\\
 &II&(5, 0.47,  1)  &0.72& 0.50& 3.30& 0.47& 1.00&34.89155 \\
   \hline 
    \end{tabular}
    \label{ck}
\end{table}
\begin{table}[hbt!]
    \centering
      \caption{Optimal BSPs for different values of $C_{\tau}$ }
    \begin{tabular}{|c|cccccccc|}
    \hline
  $C_{\tau}$& Type&$\boldsymbol{\zeta}^*=(n^*,h^*,m^*)$&$R_0^*$&$P(A)$&$E[D_t^*]$&$E[\tau^*]$&$E[I^*]$& Bayes Risk\\
    \hline  
        \hline 
 \multirow{3}{*}{0}&I& (4, 0.36,  3) & -&0.486& 3.485& 0.838 &2.329&33.66994 \\
 &II&(4,  0.36,  3)&0.74 &0.486& 3.485& 0.838& 2.329& 33.66994 \\
 \hline
  \multirow{3}{*}{0.1}     &I&(4, 0.36,  3)&- &0.486& 3.485& 0.838 &2.329& 33.75377\\
    &II& (4,  0.36,  3)&0.74 &0.486& 3.485& 0.838& 2.329& 33.75377 \\
       \hline
 \multirow{3}{*}{0.5}  &I&  (5, 0.29,  2)&- &0.502& 3.611& 0.553& 1.908&34.04424 \\
 &II& (5, 0.29, 2) &0.76& 0.502& 3.611 &0.553& 1.908&34.04424 \\
   
   \hline
 
   \multirow{3}{*}{1}     &I&(5, 0.25,  2) &-&0.500& 3.393 &0.484& 1.936&34.31483 \\
  &II& (5, 0.25,  2) &0.76&0.500& 3.393 &0.484& 1.936&34.31483 \\
      \hline
    \end{tabular}
    \label{ct}
\end{table}
\begin{table}[hbt!]
    \centering
      \caption{Optimal BSPs for different values of $\eta$ }
    \begin{tabular}{|c|cccccccc|}
    \hline
  $\eta$& Type&$\boldsymbol{\zeta}^*=(n^*,h^*,m^*)$&$R_0^*$&$P(A)$&$E[D_t^*]$&$E[\tau^*]$&$E[I^*]$& Bayes Risk\\
  \hline
   \multirow{3}{*}{0.5}     &I& (0, 0, 0)&- &0&0& 0& 0&  40\\
   &II& (0, 0, 0)&1&0&0& 0& 0& 40\\
    \hline
   \multirow{3}{*}{0.75}     &II& (3, 0.37,  3)&-& 0.261& 2.764 &0.716& 1.934& 37.74172 \\
   &II& (3, 0.37,  3) &  0.79& 0.261& 2.764 &0.716& 1.934& 37.74172\\
   \hline
 \multirow{3}{*}{1.25}     &I& (5, 0.38,  1)&-&0.703& 2.622& 0.380& 1.000& 29.46731\\
 &II& (5, 0.38,  1)&  0.66  &0.703& 2.622& 0.380 &1.000&29.46731\\
 \hline
  \multirow{3}{*}{1.5}     &I&(0, 0, 0)&-&1&0&0&0& 24.78307\\
 &II&  (0, 0, 0)& 0 &1&0&0&0&24.78307\\
       \hline

    \end{tabular}
    \label{beta}
\end{table}
\begin{table}[hbt!]
    \centering
      \caption{Optimal BSPs for different values of $\alpha$ }
    \begin{tabular}{|c|cccccccc|}
    \hline
  $\alpha$& Type&$\boldsymbol{\zeta}^*=(n^*,h^*,m^*)$&$R_0^*$&$P(A)$&$E[D_t^*]$&$E[\tau^*]$&$E[I^*]$& Bayes Risk\\
  \hline
   \multirow{3}{*}{2.2}     &I&(5, 0.25,  2) &- &0.690& 2.951& 0.490& 1.961&28.28359 \\
   &II& (5, 0.40, 1)&0.670&0.691 &2.615& 0.400& 1.000& 28.288779\\
    \hline
   \multirow{3}{*}{2.5}     &I& (5, 0.27,  3)&-&0.599& 3.866& 0.727& 2.691& 31.31123 \\
   &II& (5,  0.43,  1)&0.690&0.601& 2.955 &0.430 &1.000& 32.25544\\
   \hline
 \multirow{3}{*}{3}     &I&(4, 0.31,  3)&- &0.434 &3.444 &0.741 &2.390&35.38329\\
 &II& (5, 0.30,  2)&0.77& 0.446 &3.779& 0.566& 1.887&35.45693\\
 \hline
  \multirow{3}{*}{3.3}     &I&(4, 0.40,  3) &-&0.343& 3.703 &0.836& 2.090&37.22465\\
 &II& (5, 0.32,  2)&0.78 &0.360 &4.023 &0.590 &1.844&37.36195\\
       \hline

    \end{tabular}
    \label{alpha}
\end{table}\\
 In Table \ref{ct}, it is observed that the expected duration of life test decreases when $C_{\tau}$ increases as expected. In Table \ref{alpha}, we find that for type-II BSP, the sample size $n$ remains unchanged for the different values of $\alpha$. However, the time interval $h$ and inspection number change with $\alpha$. In Table \ref{beta}, a similar fact is seen as in Table \ref{cb}.
Tables 1-6 show that in most cases, optimum BSPs do not change with the types of BSP. 
\subsection{Illustrative example using approximate Bayes risk}
\textbf{Example 2:}
We consider the hyperparameter values of prior distribution as $\alpha=2.8$, $\eta=1$, $\alpha_1=1.5$ and $\alpha_2=1.8$. The  acceptance cost coefficient are taken as $C_0=2$, $C_1=C_2=4$ and $C_{11}=C_{12}=C_{22}=4$. The other cost coefficients are taken as $C_r=40$, $C_s=0.15$, $r_s=0.1$, $C_{\tau}=0.3$, and $C_I=0$ and $\tau_0=0.1$. The optimal BSP is provided in Table \ref{optimal1}. 
\begin{table}[hbt!]
    \centering
      \caption{Optimal BSP using approximate Bayes Risk function }
    \begin{tabular}{|ccccccc|}
    \hline
   $\boldsymbol{\zeta}^*=(n^*,h^*,k^*)$&$R_0^*$&$P(A)$&$E[D_t^*]$&$E[\tau^*]$&$E[I^*]$& Bayes Risk\\
    \hline
     (13, 0.102, 5)&0.761 &0.489& 3.337& 0.740& 2.467& 31.4662\\
\hline
    \end{tabular}
    \label{optimal1}
\end{table}
\begin{table}[hbt!]
 \centering
      \caption{Optimal BSP for different value of $C_s$ and $C_I$ using approximate Bayes Risk function}
    \begin{tabular}{|cc|ccccccc|}
    \hline
 $C_s$& $C_I$&  $\boldsymbol{\zeta}^*=(n^*,h^*,k^*)$&$R_0^*$&$P(A)$&$E[D_t^*]$&$E[\tau^*]$&$E[I^*]$& Bayes Risk\\
    \hline
    0.12& 0.01& (17,  0.100,  4)&  0.759&0.576&10.373 & 0.399 & 3.990& 31.0691087\\
 0.12& 0.05& (19,  0.159,  2)&  0.759&0.575&10.230 & 0.318 & 2.000& 31.1659256\\
0.12&0.1&(19, 0.159, 2)&0.759&0.575&10.230&  0.318 & 2.000& 31.26591\\
\hline
    0.15&0.01&(13, 0.103,  5)& 0.761&0.573&8.938& 0.507& 4.924& 31.5154582\\
    0.15&0.05&(14, 0.149,  3)&0.761&0.572&9.025 &0.445 &2.986& 31.6515661\\
 0.15& 0.1& 
(15,  0.194, 2)&0.761&0.570&9.010 &0.388& 1.999& 31.7571885\\
\hline
0.18& 0.01& (11, 0.117,  5)&0.763&0.569&7.971 &0.568 &4.850&31.8755930\\
0.18& 0.05& (12, 0.166, 3)& 0.763&0.568&8.130 &0.493 &2.969& 32.0293011\\
0.18& 0.1& (12, 0.216,  2)&0.763&0.567&7.609& 0.431& 1.995&32.1584409\\
\hline
    \end{tabular}
    \label{optimal2}
\end{table}
For different values of $C_I$ and $C_s$, the optimal BSPs using approximate Bayes Risk function are tabulated in Table \ref{optimal2}. It is seen that when $C_I$ is fixed, the sample size decreases with $C_s$ and when $C_s$ is fixed the sample size increases with $C_I$. Also, the acceptance limit $R_0^*$ increases with $C_s$. 
\subsection{Illustration of the decision function after life testing } In all cases, after determining the value of $\boldsymbol{\zeta}$, the manufacturer conducts the life testing to take a decision about the lot. In Example 1, the optimal plan is $(4, 0.30, 3)$ for the equal inspection intervals. Therefore, in that case, the manufacturer conducts a life test with a sample size of $4$ and after a time length of $0.30$, the number of failures due to each cause is observed. The decision about the lot is taken based on that data. In that scenario, the acceptance probability and optimal sampling parameter both are same. Therefore, we can say that the acceptance set for both types of BSPs are same. Now, we generated six observations using the optimal design $(4,0.30,3)$ which are tabulated in Table \ref{generated data}. For type-I BSP, the decision is taken for the observed data $\boldsymbol{d}$ based on the function $\varphi(\boldsymbol{d})$ and for type-II BSP, the decision is taken based on estimated reliability. In Table \ref{generated data}, both the functions are calculated for the observed data $\boldsymbol{d}$. Lastly, the decision for the observed data $\boldsymbol{d}$ are tabulated in Table \ref{generated data}. For type-I BSP, the lot is accepted when $\varphi(\boldsymbol{d})< C_r$ and for type-II BSP, the lot is accepted when the estimated reliability is greater than 0.76.
\begin{table}[hbt!]
     \centering
     \caption{The data sets, estimated reliability and corresponding decision about the lot}
  \small   \begin{tabular}{|c|ccc|ccc|}
     \hline
  $i$&\multicolumn{3}{c|}{ Data} & Estimated & $\varphi(\boldsymbol{d})$&Decision \\
        \cline{2-4} &$(d_{11},d_{12})$&$(d_{21},d_{22})$&$(d_{31},d_{32})$&Reliability&&\\
         \hline
         1&(0, 0)& (0, 0)& (0, 1)& 0.971&8.439& Accept the lot\\
      2&  (0, 0)&(0, 0)&(0, 0)&0.973&6.063&Accept the lot\\
      3& (0, 0)& (1, 0)& (2, 1)& 0.860&22.219&Accept the lot\\
      4 & (1, 1)& (1, 0)& (1, 0)&0.753&42.531& Reject the lot\\
      5&(2, 0)&(2, 0)&-& 0.693&55.898&Reject the lot\\
     6&(2, 1)& (0, 0)& (0, 1) &0.693&53.155& Reject the lot\\
      \hline 
     \end{tabular}
     \label{generated data}
\end{table}

In Table \ref{generated data}, it is seen that for $2^{nd}$ observed data, all $4$ items failed after the $2^{nd}$ inspection. Therefore, no samples are surviving for $3^{rd}$ inspection. For that reason, $E[M]$ is less than $k$ and $E[\tau]$ is less than $\tau_k$. 
\section{Conclusion}\label{con}
In this work, we have considered determining the optimal BSP under ICS for competing risk data. The advantage of using ICS is that continuous monitoring is not required. The work can be extended to progressive type-I ICS. We have considered an exponential distribution for illustration. In this work, it is observed that for ICS, the optimal BSP based on reliability acceptance criteria is similar to the optimal BSP based on the Bayes decision function. Therefore, for a large sample, using asymptotic properties of MLEs, the approximate form of Bayes risk can be considered to find an optimal BSP. The proposed methodology of the approximate form of Bayes risk can be extended to other lifetime distributions.\\
Also, in these works, a single prior assessment is used for the parameters of the lifetime distribution. However, the consumer and the manufacturer have different prior knowledge about the parameters of the lifetime distributions. Different prior assessment for the parameters of the lifetime distribution for single failure mode items was studied by \citet{das2024bayesian,das2024hybrid}. Using this concept, the work can be extended to a model where the consumer and the manufacturer have different prior knowledge about the parameters of the lifetime distributions for multiple failure mode items.
%\section*{Acknowledgement}
\bibliographystyle{unsrtnat}
\bibliography{citiation}
% \section*{Author Biographies}
% \begin{longtable}{|c|}
% \hline
% \\
% \begin{minipage}{0.16\textwidth}
%  {\includegraphics[width=1in,height=1.25in]{Biswabrata.jpg}}
% \end{minipage}
% \begin{minipage}{0.73\textwidth}
% \textbf{Biswabrata Pradhan} is a Professor at the Statistical Quality Control \& Operations Research (SQC \& OR) Unit, Indian Statistical Institute, Kolkata. He received his M. Tech. in Quality, Reliability and Operations Research, and PhD in Statistics from the Indian Statistical Institute in 1996 and 2010, respectively.  His primary areas of research include inference based on censored data, design of censored life testing experiments, reliability theory and survival analysis
% \end{minipage}\\\\
% \begin{minipage}{0.16\textwidth}
%     {\includegraphics[width=1in,height=1.25in]{markup_1000125920 (1) (2).jpg}}
% \end{minipage}
% \begin{minipage}{0.73\textwidth}
%     \textbf{Rathin Das} is currently a PhD student in the Statistical Quality Control and Operations Research (SQC \& OR) Unit of the Indian Statistical Institute (ISI), Kolkata, and pursuing his PhD work in Quality, Reliability, and Operations Research (QR \& OR) from ISI. He completed his B.Sc. (Hons.) degree in Mathematics from the University of Burdwan in 2018 and completed his M.Sc. in Applied Mathematics in 2020 from the Indian Institute of Engineering Science and Technology, Shibpur. 
% \end{minipage}
% \\\\
% \hline
% \end{longtable}

\end{document}